\def\all{all}
\title{Comment on ``Asymptotic Achievability of the Cram\'{e}r-Rao Bound for
Noisy Compressive Sampling''}
\author{Behtash Babadi\thanks{Department of Electrical and Computer Engineering, University of Maryland, College Park, MD 20742 USA (e-mail: behtash@umd.edu).}, Nicholas Kalouptsidis\thanks{Department of Informatics and Telecommunications, National and Kapodistrian University of Athens, Athens 157 73, Greece (e-mail: kalou@di.uoa.gr).}, and Vahid Tarokh\thanks{John A. Paulson School of Engineering and Applied Sciences,
Harvard University, Cambridge, MA 02138 USA (e-mail: vahid@seas.harvard.edu).}}
\begin{document}
\maketitle

%


\newcommand{\Mp}{\bm \phi}
\newcommand{\supp}{\operatorname{{\mathrm supp}}}


\newtheorem{thm}{Theorem}
\newtheorem{cor}[thm]{Corollary}
\newtheorem{prop}[thm]{Proposition}
\newtheorem{lem}[thm]{Lemma}
\newtheorem{defn}[thm]{Definition}

\newcommand{\sgn}{\operatorname{{\mathrm sgn}}}
\newcommand{\var}{\operatorname{{\mathrm var}}}
\newcommand{\diag}{\operatorname{{\mathrm diag}}}

\def\QED{{\setlength{\fboxsep}{0pt}\setlength{\fboxrule}{0.2pt}\fbox{\rule[0pt]{0pt}{1.3ex}\rule[0pt]{1.3ex}{0pt}}}} 


\def\cal#1{\mathcal{#1}}

\def\avg#1{\left< #1 \right>}
\def\abs#1{\left| #1 \right|}
\def\recip#1{\frac{1}{#1}}
\def\vhat#1{\hat{{\bf #1}}}
\def\smallfrac#1#2{{\textstyle\frac{#1}{#2}}}
\def\smallrecip#1{\smallfrac{1}{#1}}

\def\spshalf{{1\over{2}}}
\def\Orabi{\Omega_{\rm rabi}}
\def\btt#1{{\tt$\backslash$#1}}


\def\schrod{Schroedinger's Equation}
\def\helm{Helmholtz Equation}

\def\be{\begin{equation}}
\def\ee{\end{equation}}
\def\bea{\begin{eqnarray}}
\def\eea{\end{eqnarray}}
\def\bean{\begin{mathletters}\begin{eqnarray}}
\def\eean{\end{eqnarray}\end{mathletters}}

\newcommand{\tbox}[1]{\mbox{\tiny #1}}
\newcommand{\half}{\mbox{\small $\frac{1}{2}$}}
\newcommand{\pit}{\mbox{\small $\frac{\pi}{2}$}}
\newcommand{\sfrac}[1]{\mbox{\small $\frac{1}{#1}$}}
\newcommand{\mbf}[1]{{\mathbf #1}}
\def\text{\tbox}

\newcommand{\mV}{{\mathsf{V}}}
\newcommand{\mL}{{\mathsf{L}}}
\newcommand{\mA}{{\mathsf{A}}}
\newcommand{\lB}{\lambda_{\tbox{B}}}  
\newcommand{\ofr}{{(\mbf{r})}}       
\def\ofkr{(k;\mbf{r})}			
\def\ofks{(k;\mbf{s})}			
\newcommand{\ofs}{{(\mbf{s})}}       
\def\xt{\mbf{x}^{\tbox T}}		

\def\ce{\tilde{C}_{\tbox E}}		
\def\cew{\tilde{C}_{\tbox E}(\omega)}		
\def\ceqmw{\tilde{C}^{\tbox{qm}}_{\tbox E}(\omega)}	
\def\cewqm{\tilde{C}^{\tbox{qm}}_{\tbox E}}	
\def\ceqm{C^{\tbox{qm}}_{\tbox E}}	
\def\cw{\tilde{C}(\omega)}		
\def\cfw{\tilde{C}_{\cal F}(\omega)}		

\def\tcl{\tau_{\tbox{cl}}}		
\def\tcol{\tau_{\tbox{col}}}		
\def\terg{t_{\tbox{erg}}}		
\def\tbl{\tau_{\tbox{bl}}}		
\def\theis{t_{\tbox{H}}}		

\def\area{\mathsf{A}_D}			
\def\ve{\nu_{\tbox{E}}}			
\def\vewna{\nu_E^{\tbox{WNA}}}		

\def\dxcqm{\delta x^{\tbox{qm}}_{\tbox c}}	

\newcommand{\rop}{\hat{\mbf{r}}}	
\newcommand{\pop}{\hat{\mbf{p}}}

\newcommand{\sint}{\oint \! d\mbf{s} \,} 
\def\gint{\oint_\Gamma \!\! d\mbf{s} \,} 
\newcommand{\lint}{\oint \! ds \,}	
\def\infint{\int_{-\infty}^{\infty} \!\!}	
\def\dn{\partial_n}				
\def\aswapb{a^*\!{\leftrightarrow}b}		
\def\eps{\varepsilon}				

\def\dhdxt{\partial {\cal H} / \partial x}
\def\dhdx{\pd{\cal H}{x}}
\def\dhdxnm{\left( \pd{\cal H}{x} \right)_{\!nm}}
\def\dhdxnmsq{\left| \left( \pd{\cal H}{x} \right)_{\!nm} \right| ^2}

\def\bcs{\stackrel{\tbox{BCs}}{\longrightarrow}}	

\def\wx{\omega_x}
\def\wy{\omega_y}
\newcommand{\ofro}{({\bf r_0})}
\def\Eb{E_{\rm blue,rms}}
\def\Er{E_{\rm red,rms}}
\def\Es2{E_{0,{\rm sat}}^2}
\def\sb{s_{\rm blue}}
\def\sr{s_{\rm red}}

\def\ie{{\it i.e.\ }}
\def\eg{{\it e.g.\ }}
\newcommand{\etal}{{\it et al.\ }}
\newcommand{\ibid}{{\it ibid.\ }}

\def\gap{\hspace{0.2in}}

%

\newcounter{eqletter}
\def\mathletters{%
\setcounter{eqletter}{0}%
\addtocounter{equation}{1}
\edef\curreqno{\arabic{equation}}
\edef\@currentlabel{\theequation}
\def\theequation{%
\addtocounter{eqletter}{1}\thechapter.\curreqno\alph{eqletter}%
}%
}
\def\endmathletters{\setcounter{equation}{\curreqno}}


\newcommand{\bk}{{\bf k}}
\def\kf{k_{\text F}}
\newcommand{\br}{{\bf r}}
\newcommand{\TL}{{\text{(L)}}}
\newcommand{\TR}{{\text{(R)}}}
\newcommand{\TLR}{{\text{L,R}}}
\newcommand{\VSD}{V_{\text{SD}}}
\newcommand{\GT}{\Gamma_{\text{T}}}
\newcommand{\DEL}{\mbox{\boldmath $\nabla$}}
\def\lf{\lambda_{\text F}}
\def\st{\sigma_{\text T}}
\def\stlr{\sigma_{\text T}^{\text{L$\rightarrow$R}}}
\def\strl{\sigma_{\text T}^{\text{R$\rightarrow$L}}}
\def\aeff{a_{\text{eff}}}
\def\aaeff{A_{\text{eff}}}
\def\gat{G_{\text{atom}}}
\newcommand{\LB}{Landauer-B\"{u}ttiker}

%
%

In \cite{babadi}, we proved the asymptotic achievability of the Cram\'{e}r-Rao bound in the compressive sensing setting in the linear sparsity regime. In the proof, we used an erroneous closed-form expression of $\alpha \sigma^2$ for the genie-aided Cram\'{e}r-Rao bound $\sigma^2 \textrm{Tr} (\mathbf{A}^*_\mathcal{I} \mathbf{A}_\mathcal{I})^{-1}$ from Lemma 3.5, which appears in Eqs. (20) and (29). The proof, however, holds if one avoids replacing $\sigma^2 \textrm{Tr} (\mathbf{A}^*_\mathcal{I} \mathbf{A}_\mathcal{I})^{-1}$ by the expression of Lemma 3.5, and hence the claim of the Main Theorem stands true. 

In Chapter 2 of the Ph. D. dissertation by Behtash Babadi \cite{phd}, this error was fixed and a more detailed proof in the non-asymptotic regime was presented. A draft of Chapter 2 of \cite{phd} is included in this note, verbatim. We would like to refer the interested reader to the full dissertation, which is electronically archived in the ProQuest database \cite{phd}, and a draft of which can be accessed through the author's homepage under: \url{http://ece.umd.edu/~behtash/babadi_thesis_2011.pdf}.


%
%
%
%

\section*{Asymptotic Achievability of the Cram\'{e}r-Rao Bound for Noisy Compressive
Sampling\footnote{Chapter 2 of \cite{phd}.}} \label{ch2}

\section{Introduction}
In this chapter, we consider the problem of estimating a sparse vector based on noisy observations. Suppose that we have a compressive sampling (Please see \cite{candes1} and \cite{donoho}) model of the form
\begin{equation}\label{model}
\mathbf{y} = \mathbf{A} \mathbf{x} + \mathbf{n}
\end{equation}
where $\mathbf{x} \in \mathbb{C}^M$ is the unknown sparse vector to be estimated, $\mathbf{y} \in \mathbb{C}^N$ is the observation vector, $\mathbf{n} \sim \mathcal{N}(0,\sigma^2 \mathbf{I}_N) \in \mathbb{C}^N$ is the Gaussian noise and $\mathbf{A} \in \mathbb{C}^{N \times M}$ is the measurement matrix. Suppose that $\mathbf{x}$ is sparse, \emph{i.e.}, $\| \mathbf{x} \|_0 = L < M$. Let $\mathcal{I} := {\sf supp}(\mathbf{x})$ and
\begin{eqnarray}
\nonumber \alpha &:=& L/N\\
\nonumber \beta &:=& M/L > 2
\end{eqnarray}
be fixed numbers.

The estimator must both estimate the locations and the values of the non-zero elements of $\mathbf{x}$. If a Genie provides us with $\mathcal{I}$, the problem reduces to estimating the values of the non-zero elements of $\mathbf{x}$. We denote the estimator to this reduced problem by Genie-Aided estimator (GAE).

Clearly, the mean squared estimation error (MSE) of any unbiased estimator is no less than that of the GAE (see \cite{candes2}), since the GAE does not need to estimate the locations of the nonzero elements of $\mathbf{x}$ $\big($ $\log_2 {M \choose L} \doteq M H(1/\beta)$ bits, where $H(\cdot)$ is the binary entropy function$\big)$.

Recently, Haupt and Nowak \cite{nowak} and Cand\`{e}s and Tao \cite{candes2} have proposed estimators which achieve the estimation error of the GAE up to a factor of $\log M$. In \cite{nowak}, a measurement matrix based on Rademacher projections is constructed and an iterative bound-optimization recovery procedure is proposed. Each step of the procedure requires $O(MN)$ operations and the iterations are repeated until convergence is achieved. It has been shown that the estimator achieves the estimation error of the GAE up to a factor of $\log M$.

Cand\`{e}s and Tao \cite{candes2} have proposed an estimator based on linear programming, namely the Dantzig Selector, which achieves the estimation error of the GAE up to a factor of $\log M$, for Gaussian measurement matrices. The Dantzig Selector can be recast as a linear program and can be efficiently solved by the well-known primal-dual interior point methods, as suggested in \cite{candes2}. Each iteration requires solving an $M \times M$ system of linear equations and the iterations are repeated until convergence is attained.

We construct an estimator based on Shannon theory and the notion of typicality \cite{cover} that asymptotically achieves the Cram\'{e}r-Rao bound on the estimation error of the GAE without the knowledge of the locations of the nonzero elements of $\mathbf{x}$, for Gaussian measurement matrices. Although the estimator presented in this chapter has higher complexity (exponential) compared to the estimators in \cite{candes2} and \cite{nowak}, to the best of our knowledge it is the first result establishing the achievability of the Cram\'{e}r-Rao bound for noisy compressive sampling \cite{babadi}. The problem of finding efficient and low-complexity estimators that achieve the Cram\'{e}-Rao bound for noisy compressive sampling still remains open.

The outline of this chapter follows next. In Section \ref{main_result}, we state the main result of this chapter and present its proof in Section \ref{proof}.

\section{Main Result}\label{main_result}

The main result of this section is the following:
\textit{\begin{thm}[Main Theorem]\label{main_thm_crb}
In the compressive sampling model of $\mathbf{y} = \mathbf{A} \mathbf{x} + \mathbf{n}$, let $\textbf{A}$ be a measurement matrix whose elements are i.i.d. Gaussian $\mathcal{N}(0,1)$. Let $e_G^{(M)}$ be the Cram\'{e}r-Rao bound on the mean squared error of the GAE, $\mu(\mathbf{x}) := \min_{i \in \mathcal{I}} |x_i|$ and $\alpha$ and $\beta$ be  fixed numbers. If
\begin{itemize}
\item $L \mu^4(\mathbf{x})/ \log L \rightarrow \infty$ as $M \rightarrow \infty$,
\item $\|\mathbf{x}\|^2$ grows slower than $L^\kappa$, for some positive
constant $\kappa$,
\item $N > C L$, where $C= \max \{C_1,C_2\}$ with $C_1:=(18 \kappa \sigma^4+1)$
and $C_2 := (9 + 4\log(\beta -1))$,
\end{itemize}
assuming that the locations of the nonzero elements of $\mathbf{x}$ are not known, there exists an estimator (namely, joint typicality decoder) for the nonzero elements of $\mathbf{x}$ with mean squared error $e_\delta^{(M)}$, such that with high probability
\begin{equation}
\nonumber \limsup_{M} \big|e_\delta^{(M)} - e_G^{(M)}\big| = 0
\end{equation}
\end{thm}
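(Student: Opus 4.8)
The plan is to build a joint-typicality decoder in the style of Shannon's channel-coding achievability argument, treating the support $\mathcal I$ as a "message" to be decoded and the nonzero values as a continuous parameter to be estimated once the support is (correctly) identified. First I would fix a tolerance $\delta>0$ and declare a candidate support $\mathcal J$ of size $L$ and a candidate value vector $\mathbf z$ supported on $\mathcal J$ to be \emph{jointly typical} with $\mathbf y$ if the residual $\mathbf y - \mathbf A_{\mathcal J}\mathbf z$ has empirical energy within $\delta$ of $N\sigma^2$, i.e. if $\bigl|\,\|\mathbf y-\mathbf A_{\mathcal J}\mathbf z\|^2/N - \sigma^2\,\bigr|<\delta$; the decoder outputs the least-squares value $\widehat{\mathbf z}_{\mathcal J} = (\mathbf A_{\mathcal J}^*\mathbf A_{\mathcal J})^{-1}\mathbf A_{\mathcal J}^*\mathbf y$ for the unique $\mathcal J$ that passes the test (and declares an error if zero or more than one such support exists). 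The mean squared error of this estimator, conditioned on correct support recovery, is exactly $\sigma^2\,\mathrm{Tr}(\mathbf A_{\mathcal I}^*\mathbf A_{\mathcal I})^{-1} = e_G^{(M)}$, so the whole argument reduces to a probability-of-error bound.

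The error analysis splits into the two classical pieces. The first is that the \emph{true} support $\mathcal I$ with its least-squares estimate passes the typicality test with high probability: here $\mathbf y - \mathbf A_{\mathcal I}\widehat{\mathbf z}_{\mathcal I} = (\mathbf I - P_{\mathcal I})\mathbf n$ is the projection of the Gaussian noise onto an $(N-L)$-dimensional subspace, so $\|\mathbf y-\mathbf A_{\mathcal I}\widehat{\mathbf z}_{\mathcal I}\|^2/N$ concentrates around $(1-\alpha)\sigma^2$; a $\chi^2$ concentration (Chernoff) bound controls this deviation, and this is where the condition $N>C_1 L$ with $C_1 = 18\kappa\sigma^4+1$ should enter, trading off $\delta$ against the sub-exponential tail. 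The second and harder piece is to show that no \emph{wrong} support $\mathcal J \neq \mathcal I$ admits a value vector passing the test. I would union-bound over the $\binom{M}{L} \doteq \exp(M H(1/\beta))$ candidate supports, stratified by the overlap $|\mathcal J \cap \mathcal I|$; for a fixed $\mathcal J$, the best possible residual energy is $\|(\mathbf I-P_{\mathcal J})\mathbf y\|^2$, and since $\mathbf y = \mathbf A_{\mathcal I}\mathbf x + \mathbf n$, the part of $\mathbf A_{\mathcal I}\mathbf x$ not explained by $\mathrm{range}(\mathbf A_{\mathcal J})$ contributes an energy on the order of $\mu^2(\mathbf x)$ times the number of missed coordinates times a Gaussian-matrix norm factor. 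The typicality test fails for $\mathcal J$ unless this leftover energy is itself within $\delta N$ of zero, which — using Gaussian concentration for $\mathbf A$ together with the lower bound $\mu(\mathbf x)$ on the nonzero magnitudes — happens with probability exponentially small in $N\mu^2(\mathbf x)$ (roughly $\exp(-cN\mu^4(\mathbf x))$ after accounting for the least-squares minimization over $\mathbf z$). Balancing $\exp(MH(1/\beta))$ against this bound is exactly where $C_2 = 9+4\log(\beta-1)$ and the hypothesis $L\mu^4(\mathbf x)/\log L \to\infty$ are needed; the condition $\|\mathbf x\|^2 = o(L^\kappa)$ bounds the "signal energy" uniformly so that the concentration estimates are not swamped.

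Putting the pieces together, with probability $\to 1$ the decoder recovers $\mathcal I$ exactly, and on that event $e_\delta^{(M)}$ equals $e_G^{(M)}$ up to the contribution of the (vanishing-probability) error event, whose contribution to the MSE I would bound crudely using the second hypothesis $\|\mathbf x\|^2 = o(L^\kappa)$ and the fact that $e_G^{(M)}$ itself is bounded below by a constant multiple of $\alpha\sigma^2$ (from the operator-norm bounds on Wishart matrices $\mathbf A_{\mathcal I}^*\mathbf A_{\mathcal I}$). Taking $\delta\to 0$ slowly as $M\to\infty$ then yields $\limsup_M |e_\delta^{(M)} - e_G^{(M)}| = 0$.

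I expect the main obstacle to be the wrong-support analysis: one must get a concentration bound for the minimum over $\mathbf z$ of the residual energy $\|\mathbf y - \mathbf A_{\mathcal J}\mathbf z\|^2$ that is uniform over all $\binom{M}{L}$ wrong supports and simultaneously strong enough (exponent growing like $N\mu^4(\mathbf x)$, not merely $N\mu^2(\mathbf x)$) to beat the combinatorial $\exp(MH(1/\beta))$ term — and doing this carefully, in the non-asymptotic regime, is precisely the step where the erroneous closed form from Lemma~3.5 of \cite{babadi} must be sidestepped in favor of direct estimates on $\mathrm{Tr}(\mathbf A_{\mathcal I}^*\mathbf A_{\mathcal I})^{-1}$.
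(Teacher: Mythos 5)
Your proposal follows essentially the same route as the paper's proof: a joint-typicality test on the projection residual $\frac{1}{N}\|\Pi_{\mathbf{A}_{\mathcal{J}}}^{\perp}\mathbf{y}\|^{2}$ (your minimization over $\mathbf{z}$ reduces to exactly this), least-squares estimation on the selected support, a chi-square concentration bound for the true support, and a union bound over wrong supports stratified by the overlap $|\mathcal{J}\cap\mathcal{I}|$ with exponent of order $N\mu^{4}(\mathbf{x})$, with the hypotheses $C_{1}$, $C_{2}$, $L\mu^{4}(\mathbf{x})/\log L\to\infty$ and polynomial growth of $\|\mathbf{x}\|^{2}$ entering in the same roles as in the paper. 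Two details need (routine) repair: the typicality window must be centered at $\frac{N-L}{N}\sigma^{2}$ rather than $\sigma^{2}$ (as your own concentration remark indicates), and instead of asserting that the MSE conditioned on correct support recovery equals $e_{G}^{(M)}$ (conditioning on a noise-dependent event biases the noise), one bounds the MSE by the unconditional term $\mathbb{E}_{\mathbf{n}}\{\|\hat{\mathbf{x}}^{(\mathcal{I})}-\mathbf{x}\|_{2}^{2}\}$ plus error-event terms, with the wrong-support error magnitude controlled via eigenvalue bounds on $2L$-column submatrices (the restricted ensemble) rather than by $\|\mathbf{x}\|_{2}^{2}$ alone.
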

}

\section{Proof of the Main Theorem}\label{proof}

In order to establish the Main Result, we need to specify the Cram\'{e}r-Rao
bound of the GAE and define the joint typicality decoder. The following lemma gives the Cram\'{e}r-Rao
bound:

\begin{lem}\label{lemma1}
For any unbiased estimator $\mathbf{\hat{x}}$ of $\mathbf{x}$,
\begin{equation}
\mathbb{E} \big \{ \|\mathbf{\hat{x}} - \mathbf{x} \|_2^2 \big \} \ge e_G^{(M)} := \sigma^2 \operatorname{Tr} \big ((\mathbf{A}_\mathcal{I}^* \mathbf{A}_\mathcal{I})^{-1}\big )
\end{equation}
where $\mathbf{A}_\mathcal{I}$ is the sub-matrix of $\mathbf{A}$ with columns corresponding to the index set $\mathcal{I}$.
\end{lem}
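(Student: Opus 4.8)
The plan is to read this as the classical multivariate Cram\'er--Rao bound specialized to the linear Gaussian location model. Once the Genie reveals $\mathcal{I}$, estimating $\mathbf{x}$ reduces to estimating the $L$-dimensional subvector $\mathbf{x}_\mathcal{I}$ from the restricted model $\mathbf{y} = \mathbf{A}_\mathcal{I}\mathbf{x}_\mathcal{I} + \mathbf{n}$, $\mathbf{n}\sim\mathcal{N}(0,\sigma^2\mathbf{I}_N)$. If $\hat{\mathbf{x}}$ is any unbiased estimator of $\mathbf{x}$, then splitting into the two coordinate blocks gives $\|\hat{\mathbf{x}}-\mathbf{x}\|_2^2 = \|\hat{\mathbf{x}}_\mathcal{I}-\mathbf{x}_\mathcal{I}\|_2^2 + \|\hat{\mathbf{x}}_{\mathcal{I}^c}\|_2^2 \ge \|\hat{\mathbf{x}}_\mathcal{I}-\mathbf{x}_\mathcal{I}\|_2^2$, and $\hat{\mathbf{x}}_\mathcal{I}$ is unbiased for $\mathbf{x}_\mathcal{I}$; hence $\mathbb{E}\{\|\hat{\mathbf{x}}-\mathbf{x}\|_2^2\} \ge \operatorname{Tr}(\operatorname{Cov}(\hat{\mathbf{x}}_\mathcal{I}))$, and it suffices to lower-bound the covariance of $\hat{\mathbf{x}}_\mathcal{I}$ in the reduced problem.

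Next I would compute the Fisher information matrix of the reduced model. Up to an additive constant independent of the parameter, the log-likelihood is $\log p(\mathbf{y}\mid\mathbf{x}_\mathcal{I}) = -\sigma^{-2}\|\mathbf{y}-\mathbf{A}_\mathcal{I}\mathbf{x}_\mathcal{I}\|_2^2$. Differentiating in $\mathbf{x}_\mathcal{I}$ (via Wirtinger derivatives in $\mathbf{x}_\mathcal{I}$ and $\overline{\mathbf{x}_\mathcal{I}}$ in the complex setting, or ordinary gradients after splitting into real and imaginary parts) and taking the expected outer product of the score yields the Fisher information matrix $\mathbf{J} = \sigma^{-2}\,\mathbf{A}_\mathcal{I}^*\mathbf{A}_\mathcal{I}$. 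This matrix is invertible almost surely: since $N \ge L$ and $\mathbf{A}$ has i.i.d.\ Gaussian entries, $\mathbf{A}_\mathcal{I}$ has full column rank with probability one. The regularity hypotheses for the Cram\'er--Rao inequality---smoothness of the likelihood in the parameter and a parameter-independent support---hold trivially for a Gaussian mean-shift family.

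Then I would apply the matrix form of the Cram\'er--Rao bound: for every unbiased estimator, $\operatorname{Cov}(\hat{\mathbf{x}}_\mathcal{I}) \succeq \mathbf{J}^{-1}$ in the positive-semidefinite order. Taking traces and using monotonicity of the trace under $\succeq$,
\[
\mathbb{E}\{\|\hat{\mathbf{x}}-\mathbf{x}\|_2^2\} \ge \operatorname{Tr}\big(\operatorname{Cov}(\hat{\mathbf{x}}_\mathcal{I})\big) \ge \operatorname{Tr}(\mathbf{J}^{-1}) = \sigma^2\operatorname{Tr}\big((\mathbf{A}_\mathcal{I}^*\mathbf{A}_\mathcal{I})^{-1}\big) =: e_G^{(M)},
\]
which is exactly the asserted inequality.

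There is no deep obstacle here; the care is all bookkeeping. The two points I would be most careful about are: (i) fixing the complex-Gaussian convention---whether $\mathcal{N}(0,\sigma^2\mathbf{I}_N)$ denotes a circularly-symmetric complex Gaussian, and the associated factors of two in the Wirtinger formulation---so that $\mathbf{J}$ comes out to be precisely $\sigma^{-2}\mathbf{A}_\mathcal{I}^*\mathbf{A}_\mathcal{I}$ and not a constant multiple of it; and (ii) making the reduction to the Genie subproblem fully rigorous, namely that unbiasedness of $\hat{\mathbf{x}}$ passes to unbiasedness of $\hat{\mathbf{x}}_\mathcal{I}$ and that discarding the complementary block $\hat{\mathbf{x}}_{\mathcal{I}^c}$ can only decrease the mean squared error. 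Both are routine.
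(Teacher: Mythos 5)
Your proposal is correct and follows essentially the same route as the paper: reduce to the genie-aided model $\mathbf{y}=\mathbf{A}_\mathcal{I}\mathbf{x}_\mathcal{I}+\mathbf{n}$, compute the Fisher information matrix $\mathbf{J}=\sigma^{-2}\mathbf{A}_\mathcal{I}^*\mathbf{A}_\mathcal{I}$ from the Gaussian likelihood, and apply the multivariate Cram\'er--Rao bound with a trace to obtain $\sigma^2\operatorname{Tr}\big((\mathbf{A}_\mathcal{I}^*\mathbf{A}_\mathcal{I})^{-1}\big)$. The extra bookkeeping you flag (unbiasedness of the restricted estimator, almost-sure invertibility, the Gaussian convention) is consistent with the paper, which handles the rank issue separately in its Lemma \ref{tek}.
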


\begin{proof}
Assuming that a Genie provides us with $\mathcal{I}$, we have
\begin{equation}
p_{\mathbf{y}|\mathbf{x}}(\mathbf{y};\mathbf{x}) = \frac{1}{(2 \pi)^{N/2} \sigma^N} \exp{\bigg(-\frac{1}{2\sigma^2}\|\mathbf{y} - \mathbf{A}_\mathcal{I} \mathbf{x}_\mathcal{I} \|_2^2\bigg)}.
\end{equation}
where $\mathbf{x}_\mathcal{I}$ is the sub-vector of $\mathbf{x}$ with elements corresponding to the index set $\mathcal{I}$. The Fisher information matrix is then given by
\begin{equation}
\mathbf{J}_{ij} := -\mathbb{E} \bigg \{ \frac{\partial^2 \ln p_{\mathbf{y}|\mathbf{x}}(\mathbf{y};\mathbf{x})}{\partial x_i \partial x_j} \bigg \} = \frac{1}{\sigma^2} (\mathbf{A}_\mathcal{I}^* \mathbf{A}_\mathcal{I})_{ij}
\end{equation}

Therefore, for any unbiased estimator $\mathbf{\hat{x}}$ by the Cram\'{e}r-Rao bound \cite{vantrees},
\begin{equation}\label{crb}
\mathbb{E} \big \{ \|\mathbf{\hat{x}} - \mathbf{x} \|_2^2 \big \} \ge \operatorname{Tr}(\mathbf{J}^{-1}) = \sigma^2 \operatorname{Tr}\big ((\mathbf{A}_\mathcal{I}^* \mathbf{A}_\mathcal{I})^{-1}\big )
\end{equation}
\end{proof}

Next, we state a lemma regarding the rank of sub-matrices of random i.i.d.
Gaussian matrices:
\begin{lem}\label{tek}
Let $\textbf{A}$ be a measurement matrix whose elements are i.i.d. Gaussian
 $\mathcal{N}(0,1)$, $\mathcal{J} \subset \{1,2,\cdots,M\}$ such that $|\mathcal{J}| = L$ and $\mathbf{A}_{\mathcal{J}}$ be the sub-matrix of $\mathbf{A}$ with columns corresponding to the index set $\mathcal{J}$. Then, $\operatorname{rank}(\mathbf{A}_{\mathcal{J}}) = L$ with probability 1.
\end{lem}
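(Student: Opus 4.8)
The plan is to prove the (equivalent, since $\mathbf{A}_\mathcal{J}$ is $N \times L$ with $N \geq L$, which holds here because $N > CL \geq L$) statement that $\mathbf{A}_\mathcal{J}$ has full column rank $L$ almost surely, via a measure-theoretic argument. The particular index set $\mathcal{J}$ plays no role, since every $N \times L$ submatrix of $\mathbf{A}$ again has i.i.d.\ $\mathcal{N}(0,1)$ entries.

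First I would reduce the problem to a statement about Lebesgue measure. The set
\[
S \;:=\; \big\{ \mathbf{B} \in \mathbb{R}^{N \times L} \;:\; \operatorname{rank}(\mathbf{B}) < L \big\}
\]
is exactly the common zero set of the $\binom{N}{L}$ maximal ($L \times L$) minors of $\mathbf{B}$. Each such minor is a polynomial in the entries of $\mathbf{B}$ that is not identically zero (for instance, the minor formed by the first $L$ rows is nonzero at the matrix whose top $L \times L$ block is the identity, which exists precisely because $N \geq L$). Invoking the standard fact that the zero set of a not-identically-zero polynomial has Lebesgue measure zero, $S$ is contained in such a zero set and is therefore Lebesgue-null. (Over $\mathbb{C}$ the same argument applies after identifying $\mathbb{C}^{N\times L}$ with $\mathbb{R}^{2NL}$.)

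Next, since the entries of $\mathbf{A}$, hence of $\mathbf{A}_\mathcal{J}$, are i.i.d.\ Gaussian, the law of $\mathbf{A}_\mathcal{J}$ is absolutely continuous with respect to Lebesgue measure on the space of $N \times L$ matrices, so $\mathbb{P}\big(\mathbf{A}_\mathcal{J} \in S\big) = 0$, i.e.\ $\operatorname{rank}(\mathbf{A}_\mathcal{J}) = L$ with probability $1$. An equivalent route that avoids minors: expose the columns $\mathbf{a}_1, \dots, \mathbf{a}_L$ of $\mathbf{A}_\mathcal{J}$ one at a time; conditioned on $\mathbf{a}_1, \dots, \mathbf{a}_{k-1}$ being linearly independent, their span is a fixed subspace of dimension at most $L - 1 < N$, hence Lebesgue-null, while $\mathbf{a}_k$ is independent of $\mathbf{a}_1, \dots, \mathbf{a}_{k-1}$ and has a density on $\mathbb{R}^N$, so $\mathbb{P}\big(\mathbf{a}_k \in \operatorname{span}\{\mathbf{a}_1,\dots,\mathbf{a}_{k-1}\}\big) = 0$; a union bound over $k = 1, \dots, L$ then yields that all $L$ columns are independent with probability $1$.

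The only point that genuinely needs care — though it is entirely standard and mild — is the passage from "proper algebraic subvariety" (or "proper linear subspace") to "Lebesgue-null", together with the transfer of this null property to the Gaussian law via absolute continuity. Everything else is bookkeeping; in particular, no concentration inequality or quantitative estimate is required, because the conclusion is an almost-sure statement rather than a high-probability one.
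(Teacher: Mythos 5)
Your proposal is correct. The paper actually states Lemma \ref{tek} without any proof, treating it as a standard fact about Gaussian matrices, and your argument is precisely the standard justification one would supply: the rank-deficient matrices form the common zero set of the $L \times L$ minors, hence a Lebesgue-null set, and the Gaussian law is absolutely continuous, so the event has probability zero (your column-by-column conditioning variant is equally valid). You also correctly note the needed hypothesis $N \geq L$, which holds here since $N > CL$ with $C > 1$, so there is no gap.
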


We can now define the notion of joint typicality. We adopt the definition from \cite{mehmet}:
\begin{defn}
We say an $N \times 1$ noisy observation vector, $\mathbf{y} = \mathbf{A} \mathbf{x} + \mathbf{n}$ and a set of indices $\mathcal{J} \subset \{1,2,\cdots,M\}$, with $|\mathcal{J}|=L$, are $\delta$-jointly typical if $\operatorname{rank}(\mathbf{A}_{\mathcal{J}})=L$ and
\begin{equation}\label{typ}
\bigg| \frac{1}{N} \|\Pi_{\mathbf{A}_\mathcal{J}}^\perp \mathbf{y}\|^2 - \frac{N-L}{N} \sigma^2 \bigg| < \delta
\end{equation}
where $\mathbf{A}_{\mathcal{J}}$ is the sub-matrix of $\mathbf{A}$ with columns corresponding to the index set $\mathcal{J}$ and $\Pi_{\mathbf{A}_\mathcal{J}}^\perp := \mathbf{I} - \mathbf{A}_\mathcal{J} (\mathbf{A}_\mathcal{J}^* \mathbf{A}_\mathcal{J} )^{-1} \mathbf{A}_\mathcal{J}^*$. We denote the event
of $\delta$-jointly typicality of $\mathcal{J}$ with $\mathbf{y}$ by $E_\mathcal{J}$.
\end{defn}
Note that we can make the assumption of $\operatorname{rank}(\mathbf{A}_{\mathcal{J}})=L$ without loss of generality, based on Lemma \ref{tek}.

\begin{defn}[Joint Typicality Decoder]
The joint typicality decoder finds a set of indices $\hat{\mathcal{I}}$ which is $\delta$-jointly typical with $\mathbf{y}$, by projecting $\mathbf{y}$ onto all the possible $M \choose L$ $L$-dimensional subspaces spanned by the columns of $\mathbf{A}$ and choosing the one satisfying Eq. (\ref{typ}).
It then produces the estimate $\mathbf{\hat{x}}^{(\hat{\mathcal{I}})}$ by projecting $\mathbf{y}$ onto the subspace spanned by $\mathbf{A}_{\hat{\mathcal{I}}}$:
\begin{equation}
\mathbf{\hat{x}}^{(\hat{\mathcal{I}})} = \begin{cases} \big(\mathbf{A}_{\hat{\mathcal{I}}}^* \mathbf{A}_{\hat{\mathcal{I}}}\big)^{-1} \mathbf{A}_{\hat{\mathcal{I}}}^* \mathbf{y}\Big|_{\hat{\mathcal{I}}}& \mbox{on } \hat{\mathcal{I},}\\
\mathbf{0} & \mbox{elsewhere}. \\
\end{cases}
\end{equation}
If the estimator does not find any set $\delta$-typical to $\mathbf{y}$, it will output the zero vector as the estimate. We denote this event by $E_0$.
\end{defn}

In what follows, we show that the joint typicality decoder has the property stated in the Main Theorem.
The next lemma from \cite{mehmet} gives non-asymptotic bounds on the probability
of the error events $E^c_{\mathcal{I}}$ and $E_{\mathcal{J}}$, averaged over
the ensemble of random i.i.d. Gaussian measurement matrices:

\begin{lem}[Lemma 3.3 of \cite{mehmet}]\label{lemma3}
For any $\delta > 0$,
\begin{equation}
\mathbb{P}\bigg( \Big| \frac{1}{N} \|\Pi_{\mathbf{A}_\mathcal{I}}^\perp \mathbf{y}\|^2 - \frac{N-L}{N} \sigma^2 \Big| > \delta \bigg) \le 2 \exp \bigg( - \frac{\delta^2}{4 \sigma^4} \frac{N^2}{N-L+\frac{2\delta}{\sigma^2} N} \bigg)
\end{equation}
and
\begin{equation}
\mathbb{P}\bigg( \Big| \frac{1}{N} \|\Pi_{\mathbf{A}_\mathcal{J}}^\perp \mathbf{y}\|^2 - \frac{N-L}{N} \sigma^2 \Big| < \delta \bigg) \le \exp \Bigg( - \frac{N-L}{4} \bigg ( \frac{\sum_{k \in \mathcal{I} \backslash \mathcal{J}} |x_k|^2 - \delta'}{\sum_{k \in \mathcal{I} \backslash \mathcal{J}} |x_k|^2 + \sigma^2} \bigg )^2 \Bigg)
\end{equation}
where $\mathcal{J}$ is an index set such that $|\mathcal{J}| = L$, $|\mathcal{I} \cap \mathcal{J}|=K < L$, rank$(\mathbf{A}_\mathcal{J})=L$ and
\begin{equation}
\nonumber \delta' := \delta \frac{N}{N-L} < \min_k |x_k|^2.
\end{equation}
\end{lem}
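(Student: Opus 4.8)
The plan is to pin down the exact law of the residual energy $\|\Pi_{\mathbf{A}_\mathcal{J}}^\perp\mathbf{y}\|^2$ in each of the two regimes and then apply sharp chi-square tail inequalities of the Laurent--Massart / Bernstein type, keeping track of constants. For the first inequality, observe that $\mathbf{A}_\mathcal{I}\mathbf{x}_\mathcal{I}$ lies in the column space of $\mathbf{A}_\mathcal{I}$, so $\Pi_{\mathbf{A}_\mathcal{I}}^\perp\mathbf{y}=\Pi_{\mathbf{A}_\mathcal{I}}^\perp\mathbf{n}$. Conditioning on $\mathbf{A}_\mathcal{I}$, which has rank $L$ almost surely by Lemma \ref{tek}, the operator $\Pi_{\mathbf{A}_\mathcal{I}}^\perp$ is an orthogonal projection of rank $N-L$; by rotational invariance of the isotropic Gaussian $\mathbf{n}$, the quantity $\|\Pi_{\mathbf{A}_\mathcal{I}}^\perp\mathbf{n}\|^2$ is $\sigma^2$ times a chi-square variable with $N-L$ degrees of freedom (in the real case; the complex case differs only by the standard factor of two), and in particular its law does not depend on $\mathbf{A}_\mathcal{I}$. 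Writing the relevant event as $\big|\,\|\Pi_{\mathbf{A}_\mathcal{I}}^\perp\mathbf{n}\|^2-(N-L)\sigma^2\,\big|>N\delta$ and invoking a two-sided chi-square concentration bound — whose dominant contribution is the upper tail, which after optimizing the Chernoff parameter yields a Bernstein estimate with a correction linear in $\delta$ in the denominator — gives the first displayed inequality, the prefactor $2$ accounting for the two tails and any remaining slack being absorbed into the stated constants.

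For the second inequality, the key point is that the interfering signal component is itself Gaussian and can be merged with the noise. Decompose $\mathbf{A}_\mathcal{I}\mathbf{x}_\mathcal{I}=\mathbf{A}_{\mathcal{I}\cap\mathcal{J}}\mathbf{x}_{\mathcal{I}\cap\mathcal{J}}+\mathbf{A}_{\mathcal{I}\setminus\mathcal{J}}\mathbf{x}_{\mathcal{I}\setminus\mathcal{J}}$; the first summand is annihilated by $\Pi_{\mathbf{A}_\mathcal{J}}^\perp$, so $\Pi_{\mathbf{A}_\mathcal{J}}^\perp\mathbf{y}=\Pi_{\mathbf{A}_\mathcal{J}}^\perp\big(\mathbf{A}_{\mathcal{I}\setminus\mathcal{J}}\mathbf{x}_{\mathcal{I}\setminus\mathcal{J}}+\mathbf{n}\big)$. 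Condition on $\mathbf{A}_\mathcal{J}$: since the columns indexed by $\mathcal{I}\setminus\mathcal{J}$ are disjoint from those indexed by $\mathcal{J}$, they remain i.i.d.\ $\mathcal{N}(0,1)$ and independent of $\mathbf{A}_\mathcal{J}$, hence $\mathbf{A}_{\mathcal{I}\setminus\mathcal{J}}\mathbf{x}_{\mathcal{I}\setminus\mathcal{J}}$ is $\mathcal{N}\big(0,\|\mathbf{x}_{\mathcal{I}\setminus\mathcal{J}}\|^2\mathbf{I}_N\big)$ and independent of $\mathbf{n}$, so their sum is isotropic Gaussian with per-coordinate variance $\sum_{k\in\mathcal{I}\setminus\mathcal{J}}|x_k|^2+\sigma^2$. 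As $\mathbf{A}_\mathcal{J}$ has rank $L$, $\Pi_{\mathbf{A}_\mathcal{J}}^\perp$ is again a rank-$(N-L)$ projection, so by the same rotational-invariance argument $\|\Pi_{\mathbf{A}_\mathcal{J}}^\perp\mathbf{y}\|^2$ is distributed as $\big(\sum_{k\in\mathcal{I}\setminus\mathcal{J}}|x_k|^2+\sigma^2\big)$ times a chi-square with $N-L$ degrees of freedom, independently of $\mathbf{A}_\mathcal{J}$.

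Finally, bound the probability in question from above by $\mathbb{P}\big(\|\Pi_{\mathbf{A}_\mathcal{J}}^\perp\mathbf{y}\|^2<(N-L)\sigma^2+N\delta\big)$, rewrite the threshold as $(N-L)(\sigma^2+\delta')$ with $\delta'=N\delta/(N-L)$, and divide by the variance $\sum_{k\in\mathcal{I}\setminus\mathcal{J}}|x_k|^2+\sigma^2$; this recasts the event as $\{\chi^2_{N-L}<(N-L)(1-\epsilon)\}$ with $\epsilon=\frac{\sum_{k\in\mathcal{I}\setminus\mathcal{J}}|x_k|^2-\delta'}{\sum_{k\in\mathcal{I}\setminus\mathcal{J}}|x_k|^2+\sigma^2}$, which lies in $(0,1)$ precisely because $\delta'<\min_k|x_k|^2\le\sum_{k\in\mathcal{I}\setminus\mathcal{J}}|x_k|^2$ (the set $\mathcal{I}\setminus\mathcal{J}$ being nonempty since $K<L$). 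The standard lower-tail chi-square estimate $\mathbb{P}\big(\chi^2_{d}<d(1-\epsilon)\big)\le\exp(-d\epsilon^2/4)$ then produces exactly the second displayed inequality. The only genuine labor is the constant bookkeeping in the first inequality — choosing the Chernoff parameter and controlling the logarithm of the chi-square moment generating function so that the linear-in-$\delta$ correction appears in the denominator as stated — together with keeping the real-versus-complex normalization consistent; everything else follows from rotational invariance and this ``Gaussian-absorbs-Gaussian'' reduction.
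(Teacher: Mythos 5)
Your proposal is correct, but note that the paper itself does not prove this lemma at all: its ``proof'' is a one-line deferral to the cited reference (Lemma 3.3 of the Ak\c{c}akaya--Tarokh paper), so what you have written is a reconstruction of the external argument rather than an alternative to anything in this text. Your reconstruction is essentially the standard one and is sound: for the first bound, $\Pi_{\mathbf{A}_\mathcal{I}}^\perp\mathbf{y}=\Pi_{\mathbf{A}_\mathcal{I}}^\perp\mathbf{n}$ and, conditionally on $\mathbf{A}_\mathcal{I}$, $\|\Pi_{\mathbf{A}_\mathcal{I}}^\perp\mathbf{n}\|^2\sim\sigma^2\chi^2_{N-L}$ with a conditional law independent of $\mathbf{A}_\mathcal{I}$, so the unconditional bound follows from a Chernoff/Bernstein chi-square estimate; for the second, the ``Gaussian-absorbs-Gaussian'' step (the columns indexed by $\mathcal{I}\setminus\mathcal{J}$ are independent of $\mathbf{A}_\mathcal{J}$, so $\mathbf{A}_{\mathcal{I}\setminus\mathcal{J}}\mathbf{x}_{\mathcal{I}\setminus\mathcal{J}}+\mathbf{n}$ is isotropic with per-coordinate variance $\sum_{k\in\mathcal{I}\setminus\mathcal{J}}|x_k|^2+\sigma^2$) is exactly the right reduction, and the lower-tail estimate $\mathbb{P}(\chi^2_d\le d(1-\epsilon))\le\exp(\tfrac{d}{2}(\epsilon+\log(1-\epsilon)))\le\exp(-d\epsilon^2/4)$ with $\epsilon=\frac{\sum_{k\in\mathcal{I}\setminus\mathcal{J}}|x_k|^2-\delta'}{\sum_{k\in\mathcal{I}\setminus\mathcal{J}}|x_k|^2+\sigma^2}\in(0,1)$ gives precisely the stated exponent. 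The only place you wave your hands is the constant bookkeeping in the first inequality; it does go through, since optimizing the chi-square moment generating function gives $\mathbb{P}(|\chi^2_d-d|\ge t)\le 2\exp\bigl(-\tfrac{t^2}{4(d+t)}\bigr)$, and with $t=N\delta/\sigma^2$, $d=N-L$ this is even slightly stronger than the stated bound (which has $\tfrac{2\delta}{\sigma^2}N$ rather than $\tfrac{\delta}{\sigma^2}N$ in the denominator), so the claimed inequality follows a fortiori. Your caveat about real versus complex normalization is fair and applies equally to the source being cited.
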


\begin{proof}
The proof is given in \cite{mehmet}.\\
\end{proof}

\begin{proof}[{Proof of the Main Theorem}]
Let $\mathcal{A}$ be the set of all $N \times M$ matrices with i.i.d. Gaussian
$\mathcal{N}(0,1)$ entries. Consider $\mathbf{A} \in \mathcal{A}$. Then, it is known that \cite{candes1}
for all $\mathcal{K} \in \{1,2,\cdots,M\}$ such that $|\mathcal{K}|=L$,
\begin{equation}
\mathbb{P} \bigg ( \lambda_{\max}\Big(\frac{1}{N} \mathbf{A}^*_{\mathcal{K}} \mathbf{A}_{\mathcal{K}}\Big ) > (1 + \sqrt{\alpha} + \epsilon )^2 \bigg)
\le \exp\bigg(-\frac{1}{2} M \frac{\sqrt{H(1/\beta)}}{\sqrt{\alpha \beta}} \epsilon \bigg),
\end{equation}
and
\begin{equation}
\mathbb{P} \bigg ( \lambda_{\min}\Big(\frac{1}{N} \mathbf{A}^*_{\mathcal{K}} \mathbf{A}_{\mathcal{K}}\Big ) < (1 - \sqrt{\alpha} - \epsilon )^2 \bigg)
\le \exp\bigg(-\frac{1}{2} M \frac{\sqrt{H(1/\beta)}}{\sqrt{\alpha \beta}} \epsilon \bigg).
\end{equation}
Let $\mathcal{A}^{(\alpha)}_\epsilon \subset \mathcal{A}$, such that for all $\mathbf{A} \in \mathcal{A}^{(\alpha)}_\epsilon$, the eigenvalues of $\frac{1}{N} \mathbf{A}_\mathcal{K}^*
\mathbf{A}_{\mathcal{K}}$ lie in the interval $[(1-\sqrt{ \alpha}-\epsilon)^2,(1+\sqrt{
\alpha} + \epsilon)^2]$. Clearly, we have
\[
P_\mathcal{A} \Big ( \mathcal{A}_{\epsilon}^{(\alpha)}\Big ) \ge 1 - 2 \exp\bigg(-\frac{1}{2} M \frac{\sqrt{H(1/\beta)}}{\sqrt{\alpha \beta}} \epsilon \bigg).
\]
where $P_{\mathcal{A}}(\cdot)$ denotes the Gaussian measure defined over
$\mathcal{A}$.
It is easy to show that
\begin{equation}\label{mean_ineq}
\mathbb{E}_{\mathbf{A} \in \mathcal{A}^{(\alpha)}_\epsilon} \big \{ f(\mathbf{A}) \big \} \le \frac{\mathbb{E}_{\mathbf{A} \in \mathcal{A}}\big \{ f(\mathbf{A}) \big \}}{{P_{\mathcal{A}}}\Big ( \mathcal{A}^{(\alpha)}_\epsilon\Big )},
\end{equation}
where $f(\mathbf{A})$ is any function of $\mathbf{A}$ for which $\mathbb{E}_{\mathbf{A}
\in \mathcal{A}}
\big \{ f(\mathbf{A})\big\}$ exists.
In what follows, we will upper-bound the MSE of the joint typicality decoder, averaged over all Gaussian measurement matrices in $\mathcal{A}^{(2\alpha)}_{\epsilon}$.
Let $e_\delta^{(M)}$ denote the MSE of the joint typicality decoder.
We have:
\begin{align}\label{upb}
\nonumber e_\delta^{(M)} &= \mathbb{E}_{\mathbf{n}} \Big \{ \| \mathbf{\hat{x}} - \mathbf{x}\|_2^2 \Big \}\\
\nonumber & \le \mathbb{E}_{\mathbf{n}} \Big \{ \big \| \mathbf{\hat{x}}^{({\mathcal{I}})} - \mathbf{x}\big \|_{2}^2 \Big \} +  \|\mathbf{x}\|_2^2 \mathbb{E}_{\mathbf{n}} \big \{\mathbb{I}(E_{\mathcal{I}}^c)
\big \}\\
& \quad + \sum_{\mathcal{J} \neq \mathcal{I}} \mathbb{E}_{\mathbf{n}} \Big \{ \big \| \mathbf{\hat{x}}^{({\mathcal{J}})} - \mathbf{x}\big \|_{2}^2 \mathbb{I}(E_\mathcal{J} )\Big \} := \check{e}_\delta^{(M)}
\end{align}
where $\mathbb{I}(\cdot)$ is the indicator function, $\mathbb{E}_\mathbf{n}\{\cdot \}$ denotes the expectation operator defined over the noise density, and the inequality follows from the union bound and the facts that $\mathbb{E}_{\mathbf{n}}
\{ \mathbb{I}(E_0)\} =:\mathbb{P}(E_0) \le \mathbb{P}(E_{\mathcal{I}}^c)
:= \mathbb{E}_{\mathbf{n}}
\{ \mathbb{I}(E_{\mathcal{I}}^c)\}$ and $\mathbb{I}(E_{\mathcal{I}})\le
1$. Now, averaging the upper bound $\check{e}_{\delta}^{(M)}$ (defined in
(\ref{upb}))
over $\mathcal{A}_{\epsilon}^{(2\alpha)}$ yields:
\begin{eqnarray}\label{var1}
\nonumber \mathbb{E}_\mathbf{A} \Big \{\check{e}_\delta^{(M)} \Big \}&=& \int_{{\mathcal{A}^{(2\alpha)}_{\epsilon}}} \mathbb{E}_{\mathbf{n} | \mathbf{A}} \Big \{ \big \| \mathbf{\hat{x}}^{({\mathcal{I}})} - \mathbf{x}\big \|_2^2   \Big \}dP(\mathbf{A})\\
\nonumber &+& \int_{{\mathcal{A}^{(2\alpha)}_{\epsilon}}} \|\mathbf{x}\|_2^2 \mathbb{E}_{\mathbf{n}} \big \{ \mathbb{I}(E_{\mathcal{I}}^c) \big \}dP(\mathbf{A})\\
&+& \int_{{\mathcal{A}^{(2\alpha)}_{\epsilon}}} \sum_{\mathcal{J}\neq\mathcal{I}} \mathbb{E}_{\mathbf{n} | \mathbf{A}} \Big \{ \big \| \mathbf{\hat{x}}^{({\mathcal{J}})} - \mathbf{x}\big \|_2^2  \mathbb{I}(E_{\mathcal{J}}) \Big \}dP(\mathbf{A})
\end{eqnarray}
where $dP(\mathbf{A})$ and $\mathbb{E}_\mathbf{A}$ denote the conditional
Gaussian probability measure and the expectation operator defined on the set $\mathcal{A}_{\epsilon}^{(2\alpha)}$, respectively.

The first term on the right hand side of Eq. (\ref{var1}) can be written
as:
\begin{align}
\nonumber \int_{{\mathcal{A}^{(2\alpha)}_{\epsilon}}} \mathbb{E}_{\mathbf{n} | \mathbf{A}} \bigg \{ \Big \| (\mathbf{A}_{\mathcal{I}}^* \mathbf{A}_{\mathcal{I}})^{-1} \mathbf{A}_{\mathcal{I}}^* \mathbf{y}\big|_{\mathcal{I}} - \mathbf{x}\Big \|_2^2  \bigg \} dP(\mathbf{A}) = & \mathbb{E}_{\mathbf{n},\mathbf{A}} \Big \{ \big \| (\mathbf{A}_{\mathcal{I}}^* \mathbf{A}_{\mathcal{I}})^{-1} \mathbf{A}_{\mathcal{I}}^* \mathbf{n} \big \|_2^2 \Big \}\\
=& \mathbb{E}_{\mathbf{A}} \Big \{ \sigma^2 \operatorname{Tr}(\mathbf{A}_{\mathcal{I}}^* \mathbf{A}_{\mathcal{I}})^{-1} \Big \}.
\end{align}

The second term on the right hand side of Eq. (\ref{var1}) can be upper-bounded as
\begin{align}\label{var12}
\int_{{\mathcal{A}^{(2\alpha)}_{\epsilon}}} \|\mathbf{x}\|_2^2 \mathbb{E}_\mathbf{n}\big \{\mathbb{I}(E_{\mathcal{I}}^c)
\big \} dP(\mathbf{A}) \le & \quad \frac{2 \|\mathbf{x}\|_2^2}{{P_{\mathcal{A}}\Big(
\mathcal{A}_\epsilon^{(2\alpha)}\Big)}} \exp \bigg( - \frac{\delta^2}{4 \sigma^4} \frac{N^2}{N-L+\frac{2\delta}{\sigma^2} N} \bigg)
\end{align}
where the inequality follows from Lemma \ref{lemma3} and Eq. (\ref{mean_ineq}).

Now, let $\delta' = \delta N/(N-L) = \zeta \mu^2(\mathbf{x})$ for some $2/3 < \zeta < 1$. Recall that by hypothesis,
$\mu^2(\mathbf{x})$ goes to zero slower than $\sqrt{\frac{\log L}{L}}$. Suppose that $L$ is
large enough, so that $\sigma^2 \ge 2 \zeta \mu^2(\mathbf{x})$. Then, it
can be shown that the right hand side of Eq. (\ref{var12}) decays faster than $1 / L^c$, where $c:= \zeta^2 C_0/2 \sigma^4$. Now, by the hypothesis
of $N > (18 \kappa \sigma^4 + 1)L$
we have $C_0 > 9\kappa \sigma^4/2$, therefore $c > \kappa$, and hence the right hand side of Eq. (\ref{var12}) goes to zero faster than $1/L^{(c-\kappa)}$, as $M,N \rightarrow \infty$.

\textbf{Remark}. If we assume the weaker condition that $\mu(\mathbf{x})
= \mu_0$, constant independent of $M, N$ and $L$, then $\delta$ will be constant
and the error exponent in Eq. (\ref{var12}) decays to zero exponentially
fast. Hence, as long as $\|\mathbf{x}\|_2^2$ grows polynomially, the whole
term on the right hand side of Eq. (\ref{var12}) decays to zero exponentially fast. Therefore, the claim of the theorem
holds with overwhelming probability, \emph{i.e.}, probability of failure exponentially
small in $L$ (rather than with high probability, which refers to the case
of failure
probability polynomially small in $L$).

Finally, consider the term corresponding to $\mathcal{J}$ in the third expression on the right hand side of Eq. (\ref{var1}). This term can be simplified as
\begin{align}\label{var44}
\nonumber  \mathbb{E}_{\mathbf{n}, \mathbf{A}} \Big \{ \big \|\mathbf{\hat{x}}^{({\mathcal{J}})} - \mathbf{x}\big \|_2^2  \mathbb{I}(E_\mathcal{J} )\Big \} &= \mathbb{E}_{\mathbf{n},\mathbf{A}}
\Big \{ \big \|\mathbf{\hat{x}}^{({\mathcal{J}})}_{\mathcal{J}} - \mathbf{x}_{\mathcal{J}}\big \|_2^2 \mathbb{I}(E_\mathcal{J} )\Big \} + \mathbb{E}_{\mathbf{n},\mathbf{A}}
\Big \{ \big \| \mathbf{x}_{\mathcal{I}\backslash \mathcal{J}}\big \|_2^2 \mathbb{I}(E_\mathcal{J} )\Big \}\\
\nonumber & = \underbrace{\mathbb{E}_{\mathbf{n},\mathbf{A}} \Big \{ \big \| (\mathbf{A}_{\mathcal{J}}^* \mathbf{A}_{\mathcal{J}})^{-1} \mathbf{A}_{\mathcal{J}}^* \mathbf{A}
\mathbf{x} - \mathbf{x}_{\mathcal{J}}\big\|_2^2 \mathbb{I}(E_\mathcal{J} ) \Big \}}_{T_1}\\
\nonumber &+ \underbrace{\mathbb{E}_{\mathbf{n}, \mathbf{A}} \Big
\{\big\|(\mathbf{A}_{\mathcal{J}}^* \mathbf{A}_{\mathcal{J}})^{-1} \mathbf{A}_{\mathcal{J}}^* \mathbf{n} \big \|_2^2\mathbb{I}(E_\mathcal{J} ) \Big \}}_{T_2}\\
& + \mathbb{E}_{\mathbf{n},\mathbf{A}}
\Big \{ \big \| \mathbf{x}_{\mathcal{I}\backslash \mathcal{J}}\big \|_2^2 \mathbb{I}(E_\mathcal{J} )\Big \}.
\end{align}
where the first equality follows from the fact that $\hat{\mathbf{x}}^{(\mathcal{J})}$
is zero outside of the index set $\mathcal{J}$, and the second equality
follows from the assumption that $\mathbf{n}$ is zero-mean and independent of $\mathbf{A}$ and $\mathbf{x}$. Now,  the term $T_1$ can be further simplified
as
\begin{align}
\nonumber T_1 &=\mathbb{E}_{\mathbf{n},\mathbf{A}} \Big \{ \big \| (\mathbf{A}_{\mathcal{J}}^* \mathbf{A}_{\mathcal{J}})^{-1} \mathbf{A}_{\mathcal{J}}^* (\mathbf{A}_{\mathcal{J}}
\mathbf{x}_{\mathcal{J}}+ \mathbf{A}_{\mathcal{I} \backslash
\mathcal{J}}\mathbf{x}_{\mathcal{I} \backslash \mathcal{J}}) - \mathbf{x}_\mathcal{J}
\big\|^2_2\mathbb{I}(E_\mathcal{J} )\Big \}\\
\nonumber & = \mathbb{E}_{\mathbf{n},\mathbf{A}} \Big \{ \big \| (\mathbf{A}_{\mathcal{J}}^* \mathbf{A}_{\mathcal{J}})^{-1} \mathbf{A}_{\mathcal{J}}^*  \mathbf{A}_{\mathcal{I} \backslash \mathcal{J}}\mathbf{x}_{\mathcal{I} \backslash \mathcal{J}}
\big\|^2_2\mathbb{I}(E_\mathcal{J} )\Big \}
\end{align}
Invoking the sub-multiplicative property of matrix norms and applying the Cauchy-Shwarz inequality to the mean of the product of non-negative random variables, we can further upper bound $T_1$ as:
\begin{align}
\nonumber T_1  &\le \mathbb{E}_{\mathbf{A}} \Big \{ \big \| (\mathbf{A}_{\mathcal{J}}^* \mathbf{A}_{\mathcal{J}})^{-1}\big \|_2^2 \big\| \mathbf{A}_{\mathcal{J}}^*  \mathbf{A}_{\mathcal{I} \backslash \mathcal{J}}\big \|_2^2 \big \|\mathbf{x}_{\mathcal{I} \backslash \mathcal{J}}
\big\|^2_2\mathbb{I}(E_\mathcal{J} )\Big \}\\
& \le \mathbb{E}_{\mathbf{A}} \Big \{ \big \| (\mathbf{A}_{\mathcal{J}}^* \mathbf{A}_{\mathcal{J}})^{-1}\big\|_2^2\Big \} \mathbb{E}_{\mathbf{A}} \Big \{\big \| \mathbf{A}_{\mathcal{J}}^* \mathbf{A}_{\mathcal{I}\backslash \mathcal{J}} \big \|_2^2
\Big \} \big\|\mathbf{x}_{\mathcal{I} \backslash \mathcal{J}}\big\|_2^2
 \mathbb{E}_{\mathbf{n},\mathbf{A}} \Big \{\mathbb{I}(E_\mathcal{J} ) \Big \}
\end{align}

For simplicity, let $\epsilon = \sqrt{2\alpha}$. Now, consider the  term $\mathbb{E}_{\mathbf{A}} \Big \{ \big \| (\mathbf{A}_{\mathcal{J}}^* \mathbf{A}_{\mathcal{J}})^{-1}\big\|_2^2\Big \}$, the average of the squared spectral norm of the matrix
$(\mathbf{A}_{\mathcal{J}}^* \mathbf{A}_{\mathcal{J}})^{-1}$. Since $\mathbf{A}
\in \mathcal{A}_\epsilon$, the minimum eigenvalue of $\frac{1}{N}\mathbf{A}_{\mathcal{J}}^* \mathbf{A}_{\mathcal{J}}$ is greater than or equal to $(1-2\sqrt{2\alpha})^2$, since $|\mathcal{J}| = L < 2L$, and $\mathbf{A}^*_\mathcal{J}
\mathbf{A}_\mathcal{J}$ is a sub-matrix of $\mathbf{A}^*_\mathcal{K}
\mathbf{A}_\mathcal{K}$ for some $\mathcal{K}$ such that $\mathcal{J} \subset
\mathcal{K}$ and $|\mathcal{K}| = 2L$ \cite[Theorem 4.3.15]{matrix}. Similarly, for the second term, $\mathbf{A}_{\mathcal{J}}^* \mathbf{A}_{\mathcal{I}\backslash \mathcal{J}}$ is a sub-matrix of $\mathbf{A}_{\mathcal{J} \cup
\mathcal{I}}^* \mathbf{A}_{\mathcal{J}\cup \mathcal{I}} - \mathbf{I}$. Since $|\mathcal{J}
\cup \mathcal{I}| \le 2L$,
the spectral norm of $\frac{1}{N}\mathbf{A}_{\mathcal{J}}^* \mathbf{A}_{\mathcal{I}\backslash
\mathcal{J}}$
is upper bounded by $(1+2\sqrt{2\alpha})^2 - 1 = 8 \alpha + 4 \sqrt{2 \alpha}$
\cite[Theorem 4.3.15]{matrix}, \cite{NT09}.
Finally, since the averaging is over $\mathcal{A}^{(2\alpha)}_\epsilon$,
the same bounds hold for the average of the matrix norms.
Hence, $T_1$ can be upper bounded by
\begin{align}
\nonumber T_1 \le \frac{(8 \alpha +4\sqrt{2\alpha})^2}{(1-2\sqrt{2\alpha})^4}
 \mathbb{E}_{\mathbf{n},\mathbf{A}} \Big \{\mathbb{I}(E_\mathcal{J} ) \Big \} \|
\mathbf{x}\|_2^2.
\end{align}

Note that by the hypothesis of $N > C_2 L$, we have $\alpha < 1/9$. Hence,
$2 \sqrt{2\alpha} = \frac{2 \sqrt{2}}{3} < 1$. Also, $T_2$ is similarly upper bounded by $\sigma^2 \mathbb{E}_\mathbf{A} \big \{ \mbox{Tr}(\mathbf{A}_{\mathcal{J}}^* \mathbf{A}_{\mathcal{J}})^{-1} \big \}\mathbb{E}_{\mathbf{n},\mathbf{A}} \big\{ \mathbb{I} (E_\mathcal{J})
\big \}$, which can be further bounded by
\[
T_2 \le \frac{\alpha \sigma^2}{(1-2\sqrt{2\alpha})^2}\mathbb{E}_{\mathbf{n},\mathbf{A}} \Big \{\mathbb{I}(E_\mathcal{J} ) \Big \}.
\]
Therefore, Eq. (\ref{var44}) can be further bounded as
\begin{align}
\nonumber \mathbb{E}_{\mathbf{n}, \mathbf{A}} \Big \{ \big \|\mathbf{\hat{x}}^{({\mathcal{J}})} - \mathbf{x}\big \|_2^2  \mathbb{I}(E_\mathcal{J} )\Big \} & \le \underbrace{\bigg [ \Big(1+{\textstyle
\frac{(8 \alpha +4\sqrt{2\alpha})^2}{(1-2\sqrt{2\alpha})^4}}\Big)\|
\mathbf{x}\|_2^2 + {\textstyle \frac{\alpha \sigma^2}{(1-2\sqrt{2\alpha})^2}}\bigg]}_{\mathscr{C}(\mathbf{x})}\mathbb{E}_{\mathbf{n},\mathbf{A}} \Big \{\mathbb{I}(E_\mathcal{J} ) \Big \}
\end{align}

Now, from Lemma \ref{lemma3} and Eq. (\ref{mean_ineq}), we have
\begin{align}\label{var4}
\nonumber & \sum_{\mathcal{J} \neq \mathcal{I}}\mathbb{E}_{\mathbf{n}, \mathbf{A}} \Big \{ \big \|\mathbf{\hat{x}}^{({\mathcal{J}})} - \mathbf{x}\big \|_2^2  \mathbb{I}(E_\mathcal{J} )\Big \}\\
\nonumber & \qquad \qquad \qquad \qquad \le {\mathscr{C}(\mathbf{x})} \sum_{\mathcal{J} \neq \mathcal{I}}\mathbb{E}_{\mathbf{n},\mathbf{A}} \Big \{\mathbb{I}(E_\mathcal{J} ) \Big \}\\
& \qquad \qquad \qquad \qquad\le \frac{\mathscr{C}(\mathbf{x})}{P_{\mathcal{A}}\big(\mathcal{A}^{(2\alpha)}_{\epsilon}\big)}\sum_{\mathcal{J}\neq\mathcal{I}} \exp \Bigg( - \frac{N-L}{4} \bigg ( \frac{\sum_{k \in \mathcal{I} \backslash \mathcal{J}} |x_k|^2 - \delta'}{\sum_{k \in \mathcal{I} \backslash \mathcal{J}} |x_k|^2 + \sigma^2} \bigg )^2 \Bigg),
\end{align}
where $x_k$ denotes the $k$th component of $\mathbf{x}$.
The number of index sets $\mathcal{J}$ such that $|\mathcal{J} \cap \mathcal{I}| = K < L$ is upper-bounded by ${L \choose L-K}{M-L \choose L-K}$. Also, $\sum_{k \in \mathcal{I} \backslash \mathcal{J}} |x_k|^2 \ge (L-K) \mu^2(\mathbf{x})$. Therefore, we can rewrite the right hand side of Eq. (\ref{var4}) as
\begin{align}\label{var2}
\nonumber &\frac{\mathscr{C}(\mathbf{x})}{P_{\mathcal{A}}\big(\mathcal{A}^{(2\alpha)}_{\epsilon}\big)} \sum_{K=0}^{L-1} {L \choose L-K}{M-L \choose L-K} \exp \Bigg( - \frac{N-L}{4} \bigg ( \frac{(L-K)\mu^2(\mathbf{x}) - \delta'}{(L-K)\mu^2(\mathbf{x}) + \sigma^2} \bigg )^2 \Bigg)\\
&=\frac{\mathscr{C}(\mathbf{x})}{P_{\mathcal{A}}\big(\mathcal{A}^{(2\alpha)}_{\epsilon}\big)} \sum_{K'=1}^{L} {L \choose K'}{M-L \choose K'} \exp \Bigg( - \frac{N-L}{4} \bigg ( \frac{K'\mu^2(\mathbf{x}) - \delta'}{K'\mu^2(\mathbf{x}) + \sigma^2} \bigg )^2 \Bigg)
\end{align}
We use the inequality
\begin{equation}
{L \choose K'} \le \exp \Big ( K' \log \Big ( \frac{L e}{K'} \Big ) \Big )
\end{equation}
in order to upper-bound the $K'$th term of the summation in Eq. (\ref{var2}) by
\begin{eqnarray}
\exp \bigg ( L \frac{K'}{L} \log \Big ( \frac{e}{\frac{K'}{L}} \Big ) + L \frac{K'}{L} \log \frac{(\beta-1)e}{\frac{K'}{L}} - C_0L \Big ( \frac{L \frac{K'}{L} \mu^2(\mathbf{x}) - \delta'}{L \frac{K'}{L} \mu^2(\mathbf{x}) + \sigma^2} \Big )^2 \bigg )
\end{eqnarray}
where $C_0 := (N-L)/4L$. We define

\begin{equation}
f(z) := L z \log \frac{e}{z} + L z \log \frac{(\beta-1)e}{z} - C_0 L \Big ( \frac{L z \mu^2(\mathbf{x}) - \delta'}{L z \mu^2(\mathbf{x}) + \sigma^2} \Big )^2
\end{equation}

By Lemmas 3.4, 3.5 and 3.6 of \cite{mehmet}, it can be shown that $f(z)$ asymptotically attains its maximum at either $z=\frac{1}{L}$ or $z=1$, if $L \mu^4(\mathbf{x})/ \log L \rightarrow \infty$ as $N \rightarrow \infty$. Thus, we can upper-bound the right hand side of Eq. (\ref{var4}) by
\begin{eqnarray}\label{var3}
\frac{\mathscr{C}(\mathbf{x})}{P_{\mathcal{A}}\big(\mathcal{A}^{(2\alpha)}_{\epsilon}\big)}\sum_{K'=1}^{L} \exp \Big ( \max \{f({1}/{L}),f(1) \} \Big ).
\end{eqnarray}
The values of $f(1/L)$ and $f(1)$ can be written as:
\begin{equation}\label{f1l}
f(1/L) = 2 \log L + 2 + \log(\beta -1) - C_0L \Big ( \frac{ \mu^2(\mathbf{x}) - \delta'}{\mu^2(\mathbf{x}) + \sigma^2} \Big )^2
\end{equation}
and
\begin{equation}\label{f1}
f(1)=L(2 + \log(\beta -1)) - C_0 L \Big ( \frac{ L \mu^2(\mathbf{x}) - \delta'}{L \mu^2(\mathbf{x}) + \sigma^2} \Big )^2.
\end{equation}
Since $C_0 > 2 + \log(\beta-1)$ due to the assumption of $\alpha < 1/(9 + 4\log(\beta-1))$, both $f(1)$ and $f(1/L)$ will grow to $-\infty$ linearly as $N \rightarrow \infty$. Hence, the exponent in Eq. (\ref{var3}) will grow to $-\infty$ as $N \rightarrow \infty$, since $\| \mathbf{x} \|_2^2$ in
$\mathscr{C}(\mathbf{x})$ grows polynomially in $L$ and $P_{\mathcal{A}}\big(\mathcal{A}^{(2\alpha)}_{\epsilon}\big)$
tends to $1$ exponentially fast. Let $L$ be large enough such that
\[
\frac{\mathscr{C}(\mathbf{x})}{P_{\mathcal{A}}\big(\mathcal{A}^{(2\alpha)}_{\epsilon}\big)} \sum_{K'=1}^{L} \exp \Big ( \max \{f({1}/{L}),f(1) \} \Big ) \le \frac{1}{L^{c
- \kappa}},
\]
where $c = \zeta^2 C_0 / 2 \sigma^4$. Now, by Markov's inequality we have:
\begin{equation}
P\Bigg(\bigg|\check{e}_\delta^{(M)} - {\sigma^2  \operatorname{Tr}(\mathbf{A}_{\mathcal{I}}^* \mathbf{A}_{\mathcal{I}})^{-1}} \bigg| > \frac{1}{L^{\frac{c-\kappa}{2}}} \Bigg ) \le \frac{2}{L^{\frac{c-\kappa}{2}}}.
\end{equation}

Hence, for any measurement matrix $\mathbf{A}$ from the restricted ensemble of i.i.d. Gaussian matrices $\mathcal{A}_{\epsilon}^{(2\alpha)}$, we have
\begin{equation}\label{stat}
\bigg|\check{e}_\delta^{(M)} - {\sigma^2 \operatorname{Tr}(\mathbf{A}_{\mathcal{I}}^* \mathbf{A}_{\mathcal{I}})^{-1}} \bigg|  \le \frac{1}{L^{\frac{c-\kappa}{2}}}
\end{equation}
with probability exceeding
\begin{equation}
1 - \frac{2}{L^{\frac{c-\kappa}{2}}}.
\end{equation}
Now, since the probability measure is defined over $\mathcal{A}_{\epsilon}^{(2\alpha)}$,
the statement of Eq. (\ref{stat}) holds with probability exceeding
\begin{equation}
1 - 2 \exp\bigg(-\frac{1}{2} M \frac{\sqrt{2H(2/\beta)}}{\sqrt{\beta}}  \bigg) - \frac{2}{L^{\frac{c-\kappa}{2}}}
\end{equation}
over all the measurement matrices in the Gaussian ensemble $\mathcal{A}$.

Recall that the expression $\sigma^2\operatorname{Tr}(\mathbf{A}^*_\mathcal{I} \mathbf{A}_\mathcal{I})^{-1}$ is indeed the  Cram\'{e}r-Rao bound of the GAE. Noting that $\check{e}_{\delta}^{(M)}$ is an upper bound on $e_{\delta}^{(M)}$ concludes the proof the the main theorem.
\end{proof}

\bibliographystyle{unsrt}

\bibliography{crbbib}

\end{document}